\newcommand{\red}[1]{{\color{red} #1}}
\def \tuple#1{\langle #1 \rangle}
\def \sem#1{[\![ #1 ]\!]}
\def\defemb#1#2{\expandafter\def\csname #1\endcsname
                              {\relax\ifmmode #2\else\hbox{$#2$}\fi}}
\newcommand{\ri}{{<\!\!\!<}}                
\newcommand{\var}{{\cV}ar}                
\newcommand{\revto}{{\leftarrow}}         
\newcommand{\sld}{\rightarrow_{AS}}     
\newcommand{\slda}{{\rightarrow_{AS1}}}   
\newcommand{\sldb}{{\rightarrow_{AS2}}}   
\newcommand{\is}{\rightarrow_{IS}}      
\newcommand{\ot}{\leftarrow} 
\newcommand{\adjoint}{\mathbin{\&}}
\newcommand{\comment}[1]{}
\newcommand{\saca}{\textsf{saca}}
\newcommand{\fca}{\textsf{fca}}
\newcommand{\sfca}{\textsf{sfca}}
\newcommand{\sym}{\mathsf{sym}}
\long\def\comment#1{}
\newcommand{\pr}{{\sf Prolog}}
\newcommand{\ma}{{\sf MALP}}
\newcommand{\sma}{{\sf sMALP}}
\newcommand{\fa}{{\sf FASILL}}
\newcommand{\fl}{{\sf FLOPER}}
\newcommand{\bo}{{\sf Bousi$\sim$Prolog}}
\newcommand{\li}{{\sf Likelog}}
\newcommand{\sq}{{\sf SQLP}}
\begin{document}

\title{Tuning Fuzzy Logic Programs\\ with Symbolic Execution%
\thanks{
This work has been partially supported by the EU (FEDER) and the 
Spanish \emph{Ministerio de Econom\'{\i}a y Competitividad} 
under grants TIN2013-45732-C4-2-P, TIN2013-44742-C4-1-R and by the
\emph{Generalitat Valenciana} under grant PROMETEO-II/2015/013
(SmartLogic).
}}

\titlerunning{Tuning Fuzzy Logic Programs While Testing Computed Answers}
\authorrunning{G. Moreno, J. Penabad \& G. Vidal}

\author{Gin\'es Moreno\inst{1}
\and 
Jaime Penabad\inst{2}
\and
Germ\'an Vidal\inst{3}
}
\institute{ Dept. of Computing Systems, UCLM, 
 02071  Albacete (Spain)
 \and
 Dept. of Mathematics, UCLM,  
 02071  Albacete (Spain)
 \and
 MiST, DSIC, Universitat Polit\`ecnica de Val\`encia, Spain  \\
 \email{\{Gines.Moreno,Jaime.Penabad\}@uclm.es}, \email{gvidal@dsic.upv.es}}

\maketitle

\begin{abstract}
Fuzzy logic programming is a growing declarative paradigm aiming
to integrate  fuzzy logic into logic programming. 
One of the most difficult tasks when specifying a fuzzy logic program
is determining the right weights for each rule, as well as the most
appropriate fuzzy connectives and operators. 
In this paper, we introduce a symbolic extension of fuzzy logic
programs in which some of these parameters can be left unknown, so
that the user can easily see the impact of their possible values.
Furthermore, given a number of test cases, the most appropriate values
for these parameters can be automatically computed.
\end{abstract}

\keywordname{ Fuzzy logic programming, symbolic execution, tuning}

\section{Introduction}\label{sec-intro}

\emph{Logic Programming} \cite{Llo87} has been widely used as a formal
method for problem solving and knowledge representation. Nevertheless,
traditional logic programming languages do not incorporate techniques
or constructs to explicitly deal with uncertainty and approximated
reasoning. In order to fill this gap, \emph{fuzzy logic programming}
has emerged as an interesting---and still growing---research area
which aims to consolidate the efforts for introducing fuzzy logic into
logic programming.

During the last decades, several fuzzy logic programming systems have
been developed. Here, essentially, the classical SLD resolution
principle of 
logic programming has been replaced by a fuzzy variant with the aim of
dealing with partial truth and reasoning with uncertainty in a natural
way. Most of these systems implement (extended versions of) the
resolution principle introduced by Lee \cite{Lee72}, such as
\textsf{Elf-Prolog} \cite{IK85}, \textsf{F-Prolog} \cite{LL90}, generalized annotated
logic programming \cite{KS92}, \textsf{Fril} \cite{BMP95}, \ma\ \cite{MOV04},
\textsf{R-fuzzy} \cite{GMV04},
the QLP scheme of \cite{RR08FLOPS} and the many-valued logic
programming language of \cite{Str05,Str08}. There exists also a family
of fuzzy languages based on sophisticated unification methods
\cite{Ses02} which cope with similarity/proximity relations, as occurs
with \li\ \cite{AF99}, \sq~\cite{CRR08}, \bo\
\cite{JRG09ENTCS,RJ14JIFS} and \fa\ \cite{JMPV15EPTCS,JMPV16RULEML}. Some related
approaches based on \emph{probabilistic logic programming} can be
found, e.g., in \cite{NS92,BMG12}. 

In this paper we focus on the so-called \emph{multi-adjoint logic
  programming} approach \ma\ \cite{MOV04,MOV011,MOV012}, a powerful
and promising approach in the area of fuzzy logic programming. Intuitively
speaking, logic programming is extended with a \emph{multi-adjoint
  lattice} $L$ of truth values (typically, a real number between $0$
and $1$), equipped with a collection of \emph{adjoint pairs} $\tuple{
  \&_i, \revto_i}$ and connectives: implications, conjunctions,
disjunctions, and other operators called aggregators, which are
interpreted on this lattice. Consider, for instance, the following
\ma\ rule:
\[
  good(X) ~ \revto_\mathtt{P} ~ @_\mathtt{aver}(nice(X),cheap(X))~with~0.8
\]
where the adjoint pair $\tuple{\&_\mathtt{P},\revto_\mathtt{P}}$ is
defined as 
\[
\begin{array}{llllll}
  \adjoint_{\tt P} (x,y) &\triangleq& x *  y 
  & \quad\quad\quad
    \ot_{\tt P}(x,y) &\triangleq
  & \left\{\begin{array}{ll}
             1  & \text{ if }y\leq x \\[1.5ex]
             x/y & \text{ if } 0<x<y
           \end{array}\right.
\end{array}
\]
and the aggregator $@_\mathtt{aver}$ is typically defined as $ @_{\tt
  aver}(x_1,x_2) \triangleq (x_1 + x_2)/2 $.  Therefore, the rule
specifies that $X$ is good---with a truth degree of 0.8---if $X$ is
nice and cheap. Assuming that $X$ is nice and cheap with, e.g., truth
degrees $n$ and $c$, respectively, then $X$ is good with a truth
degree of $0.8 * ((n+c)/2)$.


When specifying a \ma\ program, sometimes it might be difficult to
assign weights---truth degrees---to program rules, as well as to
determine the right connectives.\!\footnote{For instance, we have
  typically several adjoint pairs: \emph{{{\L{}}}ukasiewicz logic}
  $\tuple{\&_\mathtt{L},\revto_\mathtt{L}}$, \emph{G\"{o}del
    intuitionistic logic} $\tuple{\&_\mathtt{G},\revto_\mathtt{G}}$
  and \emph{product logic} $\tuple{\&_\mathtt{P},\revto_\mathtt{P}}$,
  which might be used for modeling \emph{pessimist}, \emph{optimist}
  and \emph{realistic scenarios}, respectively.} A programmer can
develop a prototype and repeatedly execute it until the set of answers
is the intended one. Unfortunately, this is a tedious and time
consuming operation. Actually, it might be impractical when the
program should correctly model a large number of test cases provided by the
user.

In order to overcome this drawback, in this paper we introduce a
symbolic extension of \ma\ programs called \emph{symbolic
  multi-adjoint logic programming} (\sma). Here, we can write rules
containing \emph{symbolic} truth degrees and \emph{symbolic}
connectives, i.e., connectives which are not defined on its associated
multi-adjoint lattice. In order to evaluate these programs, we
introduce a symbolic operational semantics that delays the evaluation
of symbolic expressions. Therefore, a \emph{symbolic answer} could now
include symbolic (unknown) truth values and connectives. We prove the
correctness of the approach, i.e., the fact that using the symbolic
semantics and then replacing the unknown values and connectives by
concrete ones gives the same result as replacing these values and
connectives in the original \sma\ program and, then, applying the
concrete semantics on the resulting \ma\ program.
Furthermore, we show how \sma\ programs can be used to tune a program
w.r.t.\ a given set of test cases, thus easing what is considered the most
difficult part of the process: the specification of the right weights
and connectives for each rule. We plan to integrate this tuning
process into the \fl\ system ({\em Fuzzy LOgic Programming Environment
  for Research}); see, e.g.,  
\cite{MM08RULEML,MMPV10ICAI,MMPV10RULEML,MV14JSEA}.


The structure of this paper is as follows. After some preliminaries in
Section~\ref{sec-prelims}, we introduce the framework of symbolic
multi-adjoint logic programming in Section~\ref{sec-malp} and prove
its correctness. Then, in Section~\ref{sec-tuning}, we show the
usefulness of symbolic programs for tuning several parameters so that
a concrete program is obtained. Finally, Section~\ref{sec-conc}
concludes and points out some directions for further research.

\section{Preliminaries}  \label{sec-prelims}

We assume the existence of a multi-adjoint lattice $\tuple{L, \preceq,
  \&_1,\revto_1, \ldots, \&_n,\revto_n}$, equipped with a collection
of \emph{adjoint pairs} $\tuple{\&_i,\revto_i}$---where each $\&_i$ is
a conjunctor which is intended to be used for the evaluation of
\emph{modus ponens} \cite{MOV04}---. 
In addition, in the program rules, we can have an adjoint implication 
($\revto_i$), conjunctions (denoted by $\wedge_1,
\wedge_2, \ldots$), adjoint conjuntions ($\&_1,\&_2,\dots$), disjunctions ($|_1, |_2, \ldots$), and other
operators called aggregators (usually denoted by ${@}_1, {@}_2,
\ldots$); see \cite{NW06} for more details.  More exactly,  
a multi-adjoint lattice fulfill the following properties:
\begin{itemize}
\item $\langle L, \preceq \rangle $ is a (bounded) 
complete
  lattice.\!\footnote{A complete lattice is a (partially) ordered set
    $\tuple{L,\preceq}$ such that every subset $S$ of $L$ has infimum
    and supremum elements. It is bounded if it has bottom and top
    elements, denoted by $\bot$ and $\top$, respectively. $L$ is said
    to be the {\it carrier set} of the lattice, and $\preceq$ its
    ordering relation.}
\item For each truth function of $\adjoint_i$, an increase in any of
  the arguments results in an increase of the result (they are
  \emph{increasing}).
\item For each truth function of $\ot_i$, the result increases as the
  first argument increases, but it decreases as the second argument
  increases (they are \emph{increasing} in the consequent and
  \emph{decreasing} in the antecedent).
\item $\tuple{\adjoint_i,\leftarrow_i}$ is an \emph{adjoint pair}
  in $\langle L, \preceq \rangle $, namely, for any $x,y,z \in L$\@, we
  have that: $ x \preceq (y \ot_i z)$ if and only if $(x \adjoint_i z)
  \preceq y$.
\end{itemize}
This last condition, called the \emph{adjoint property}, could be
considered the most important feature of the framework (in contrast
with other approaches) which justifies most of its properties
regarding crucial results for soundness, completeness, applicability,
etc.~\cite{MOV04}.

Aggregation operators are useful to describe or specify user
preferences. An aggregation operator, when interpreted as a truth
function, may be an arithmetic mean, a weighted sum or in general any
monotone function whose arguments are values of a multi-adjoint
lattice $L$.
Although, formally, these connectives are binary operators, we often
use them as $n$-ary functions so that ${@}(x_1, \ldots, {@}(x_{n-1},
x_n),\ldots)$ is denoted by ${@}(x_1, \ldots, x_n)$. By abuse, in
these cases, we consider ${@}$ an $n$-ary operator.
The truth function of an $n$-ary connective $\varsigma$ is denoted by
$\sem{\varsigma}:L^n\mapsto L$ and is required to be monotonic and
fulfill the following conditions:
$\sem{\varsigma}(\top,\dots,\top)=\top$ and
$\sem{\varsigma}(\bot,\dots,\bot)=\bot$.
%

In this work, given a multi-adjoint lattice $L$, we consider a first
order language $\cL_L$ built upon a signature $\Sigma_L$, that
contains the elements of a countably infinite set of variables $\cV$,
function and predicate symbols (denoted by $\cF$ and $\Pi$,
respectively) with an associated arity---usually expressed as pairs
$f/n$ or $p/n$, respectively, where $n$ represents its arity---, and
the truth degree literals $\Sigma_L^T$ and connectives $\Sigma_L^C$
from $L$.
%
%
Therefore, a well-formed formula in $\cL_L$ can be either:
\begin{itemize}
\item A \emph{value} $v\in \Sigma_L^T$, which will be interpreted as
  itself, i.e., as the truth degree $v\in L$.

\item $p(t_1,\ldots,t_n)$, if $t_1,\ldots,t_n$ are terms over
  $\cV\cup\cF$ and $p/n$ is an n-ary predicate. This formula is called
  \emph{atomic} (or just an atom). 

\item $\varsigma(e_1,\ldots,e_n)$, if $e_1,\ldots,e_n$ are well-formed
  formulas and $\varsigma$ is an $n$-ary connective with truth
  function $\sem{\varsigma}:L^n\mapsto L$.
\end{itemize}
%
As usual, a \emph{substitution} $\sigma$ is a mapping
from variables from $\cV$ to terms over $\cV\cup\cF$ such that
$Dom(\sigma) = \{x\in\cV \mid x \neq \sigma(x)\}$ is its
domain. Substitutions are usually denoted by sets of mappings like,
e.g., $\{x_1/ t_1,\ldots,x_n/ t_n\}$. 
Substitutions are extended to morphisms from terms to terms in the
natural way.  The identity substitution is denoted by
$id$. Composition of substitutions is denoted by juxtaposition, i.e.,
$\sigma\theta$ denotes a substitution $\delta$ such that $\delta(X) =
\theta(\sigma(x))$ for all $x\in\cV$.

In the following, an \emph{$L$-expression} is a well-formed formula of
$\cL_L$ which is composed only by values and connectives from $L$,
i.e., expressions over $\Sigma_L^T\cup\Sigma_L^C$.

In what follows, we assume that the truth function of any connective
$\varsigma$ in $L$ is given by a corresponding definition of the
form ${\sem{\varsigma}}(x_1, \ldots, x_n)\triangleq E$.\footnote{For 
convenience, in the following sections, we not distinguish between the 
connective $\varsigma$ and its truth function $\sem{ \varsigma}$.} 
For instance, in this work, we will be mainly concerned with the 
classical
set of adjoint pairs (conjunctors and implications) over
$\tuple{[0,1],\leq}$ shown in Figure \ref{fig-adj}, 
where labels $\tt L$, $\tt G$ and $\tt P$ mean
respectively \emph{{{\L{}}}ukasiewicz logic}, \emph{G\"{o}del
  intuitionistic logic} and \emph{product logic} (which might be used
for modeling \emph{pessimist}, \emph{optimist} and \emph{realistic
  scenarios}, respectively).

\begin{figure}[t]
\label{fig-adj}
$\begin{array}{lllllllllllllrl}
  \adjoint_{\tt P} (x,y) &\triangleq& x *  y & \quad
  \ot_{\tt P}(x,y) &\triangleq& \left\{\begin{array}{cl}
    1  & \text{ if }y\leq x \\[1.5ex]
    x/y & \text{ if } 0<x<y
  \end{array}\right. & ~~~   
  \hbox{\emph{Product logic}}
  \\[1.2ex]
  \adjoint_{\tt G} (x,y) & \triangleq & \min(x,y) &  \quad
  \ot_{\tt G}(x,y) &\triangleq &  \left\{\begin{array}{cl}
    1  & \text{ if }y\le x \\
    x & \text{otherwise}
  \end{array}\right. & ~~~ \hbox{\emph{G\"odel logic}}
\\[1.2ex]
\adjoint_{\tt L} (x,y) & \triangleq &\max(0,x+y-1)  &  \quad
\ot_{\tt L}(x,y) & \triangleq & \min (x-y+1,1) & ~~~ 
\hbox{\emph{\L ukasiewicz logic}}\\
\end{array}\\$
 \caption{Adjoint pairs of three different fuzzy logics over
$\tuple{[0,1],\leq}$.}
\end{figure}
\comment{
\[
\begin{array}{lllllllllllllrl}
  \adjoint_{\tt P} (x,y) &\triangleq& x *  y & \quad
  \ot_{\tt P}(x,y) &\triangleq& \min(1,x/y) & ~~~
  \hbox{\emph{Product}}
  \\
  \adjoint_{\tt G} (x,y) & \triangleq & \min(x,y) &  \quad
  \ot_{\tt G}(x,y) &\triangleq &  \begin{cases}
    1  & \text{ if }y\le x \\
    x & \text{otherwise}
  \end{cases} & ~~~ \hbox{\emph{G\"odel }}
\\
\adjoint_{\tt L} (x,y) & \triangleq &\max(0,x+y-1)  &  \quad
\ot_{\tt L}(x,y) & \triangleq & \min \{x-y+1,1\} & ~~~ 
\hbox{\emph{\L ukasiewicz}}\\
\end{array}
\]
}
A \ma\ \emph{rule} over a multi-adjoint lattice $L$ is a formula
$H~\revto_{i}~\cB$, where $H$ is an \emph{atomic formula} (usually
called the \emph{head} of the rule), $\revto_{i}$ is an implication
symbol belonging to some adjoint pair of $L$, and $\cB$ (which is
called the \emph{body} of the rule) is a well-formed formula over $L$
without implications.
A \emph{goal} is a body submitted as a query to the
system. 
A \ma\ program is a set of expressions 
$R ~with~ v$,
where $R$ is a rule and $v$ is a \emph{truth degree} (a value of $L$) expressing the
confidence of a programmer in the truth of rule $R$. By abuse of the
language, we often refer to $R ~with~ v$ as a rule.

See, e.g., \cite{MOV04} for a complete formulation of the \ma\
framework.

\section{Symbolic Multi-adjoint Logic Programming}\label{sec-malp}

In this section, we introduce a \emph{symbolic} extension of
multi-adjoint logic programming.  Essentially, we will allow some
undefined values (truth degrees) and connectives in the program
rules, so that these elements can be systematically computed
afterwards.
In the following, we will use the abbreviation \sma\ to refer to
programs belonging to this setting.

Here, given a multi-adjoint lattice $L$, we consider an augmented
language $\cL_L^s\supseteq\cL_L$ which may also include a number of
symbolic values, symbolic adjoint pairs and symbolic connectives which
do not belong to $L$. Symbolic objects are usually denoted as $o^s$
with a superscript $s$.

\begin{definition}[\sma\ program]
  Let $L$ be a multi-adjoint lattice.
  An \sma\ program over $L$ is a set of {symbolic rules}, where
  each symbolic rule is a formula $(H~\revto_i~\cB~ with~v)$, where
  the following conditions hold:
  \begin{itemize}
  \item $H$ is an atomic formula of $\cL_L$ (the head of the rule);
  \item $\revto_i$ is a (possibly symbolic) implication from either a
    symbolic adjoint pair $\tuple{\&^s,\revto^s}$ or from an adjoint
    pair of $L$;
  \item $\cB$ (the body of the rule) is a symbolic goal, i.e., a
    well-formed formula of $\cL_L^s$;
  \item $v$ is either a truth degree (a value of $L$) or a symbolic
    value.
  \end{itemize}
\end{definition}
%

\begin{example} \label{ex1}
  We consider the multi-adjoint lattice $\tuple{[0,1],\leq,\&_\mathtt{P},
    \revto_\mathtt{P},\&_\mathsf{G},\revto_\mathsf{G},\&_\mathtt{L},$ $\revto_\mathtt{L}}$,
  where the adjoint pairs are defined in Section~\ref{sec-prelims},
  also including 
	$@_{\tt aver}$ which is defined as follows: ${@_{\tt aver}}(x_1,x_2) \triangleq (x_1 +
  x_2)/2$. Then, the following is an \sma\ program $\cP$:
  \[
  \begin{array}{l@{~~}l@{~~}l@{~~}ll@{~~}l}
    & p(X) & \revto^{s_1} & \&^{s_2}(q(X),@_{\tt aver}(r(X),s(X)))
    & \mathit{with} & 0.9 \\
    & q(a) & & &   \mathit{with} &   v^s\\
    & r(X) & &  & \mathit{with}  & 0.7\\
    & s(X) & &  & \mathit{with} &  0.5
  \end{array}
  \]
  where $\tuple{\&^{s_1},\revto^{s_1}}$ is a symbolic adjoint pair
  (i.e., a pair not defined in $L$), $\&^{s_2}$ is a symbolic
  conjunction, and $v^s$ is a symbolic value.
\end{example}
The procedural semantics of 
\sma\ is defined in a stepwise manner as follows. First, an
\emph{operational} stage is introduced which proceeds similarly to SLD
resolution in pure logic programming. In contrast to standard logic
programming, though, our operational stage returns an expression still
containing a number of (possibly symbolic) values and
connectives. Then, an \emph{interpretive} stage evaluates these
connectives and produces a final answer (possibly containing symbolic
values and connectives).

In the following, $\cC[A]$ denotes a formula where $A$ is a
sub-expression which occurs in the---possibly empty---context
$\cC[]$. Moreover, $\cC[A/A']$ means the replacement of $A$ by $A'$ in
context $\cC[]$, whereas $\var(s)$ refers to the set of distinct
variables occurring in the syntactic object $s$, and $\theta
[\var(s)]$ denotes the substitution obtained from $\theta$ by
restricting its domain to $\var(s)$. An \sma\ \emph{state} has the
form $\tuple{\cQ;\sigma}$ where $\cQ$ is a symbolic goal and $\sigma$
is a substitution. We let $\cE^s$ denote the set of all possible \sma\
states.

\begin{definition}[admissible step]\label{as}
  Let $L$ be a multi-adjoint lattice and $\cP$ an \sma\ program over
  $L$. An \emph{admissible step} is formalized as a state transition
  system, whose transition relation $\sld{} \subseteq (\cE^s \times
  \cE^s)$ is the smallest relation satisfying the following transition
  rules:\footnote{Here, we assume that $A$ in $\cQ[A]$ is the selected
    atom. Furthermore, as 
		common practice, $mgu(E)$ denotes the
    \emph{most general unifier} of the set of equations $E$
    \cite{LMM88}.}
  \begin{enumerate}
  \item $\tuple{\cQ[A];\sigma} ~\sld~
    \tuple{(\cQ[A/v\&_i\cB])\theta;\sigma\theta}$,\\
    if $\theta = mgu(\{H= A\})\neq fail$, $(H~\revto_i~ \cB ~ with ~
    v) \ri \cP$ and $\cB$ is not empty.\footnote{For simplicity, we
      consider that facts $(H ~ with ~v)$ are seen as rules of the
      form $(H\revto_i \top ~with ~v)$ for some implication
      $\revto_i$. Furthermore, in this case, we directly derive
      the state $\tuple{(\cQ[A/v])\theta;\sigma\theta}$ since
      $v\:\&_i\top = v$ for all $\&_i$.  }
  \item $\tuple{\cQ[A];\sigma} ~ \sld~
    \tuple{(\cQ[A/\bot]);\sigma}$, \\
    if there is no rule $(H~\revto_i~ \cB ~ with ~ v) \ri \cP$ such
    that $mgu(\{H=A\})\neq fail$.
  \end{enumerate}
  Here, $(H~\revto_i~ \cB ~ with ~ v) \ri \cP$ denotes that
  $(H~\revto_i~ \cB ~ with ~ v)$ is a renamed apart variant of a rule
  in $\cP$ (i.e., all its variables are fresh). Note that symbolic
  values and connectives are not renamed.
\end{definition}
Observe that the second rule is needed to cope with expressions like \linebreak 
$@_{aver}(p(a),0.8)$, which can be evaluated successfully even when there
is no rule matching $p(a)$ since $@_{aver}(0,0.8)=0.4$.


In the following, given a relation $\to$, we let $\to^\ast$ denote its
reflexive and transitive closure. Also, an \emph{$L^s$-expression} is
now a well-formed formula of $\cL_L^s$ which is composed by values and
connectives from $L$ as well as by symbolic values and connectives.

\begin{definition}[admissible derivation] \label{aca} Let $L$ be a
  multi-adjoint lattice and $\cP$ be an \sma\ program over $L$. Given
  a goal $\cQ$, an \emph{admissible derivation} is a sequence
  $\tuple{\cQ; id}\sld^\ast \tuple{\cQ';\theta}$. When $\cQ'$ is an
  $L^s$-expression, the derivation is called \emph{final} and the pair
  $\tuple{\cQ';\sigma}$, where $\sigma=\theta[\var(\cQ)]$, is called a
  \emph{symbolic admissible computed answer} (\saca, for short) for
  goal $\cQ$ in $\cP$.
\end{definition}

\comment{
\begin{figure*}[t]
\fbox{\begin{minipage}[t]{86ex}

$\\$

~~~~~{\bf Multi-adjoint logic program $\cP$}:
\[
\begin{array}{llllllllllll}
 \cR_1: ~~~~& p(X)~~~~ & \revto_{\tt P} ~~~~
 \&_{\tt G}(q(X),@_{\tt aver}(r(X),s(X))) ~~~~
   &   with & 0.9 \\
    \cR_2: & ~ q(a) &\revto_{\tt} &    with ~~~~ &   0.8\\
 \cR_3: & ~ r(X) &\revto_{\tt} &   with  & 0.7\\
 \cR_4: & ~ s(X) &\revto_{\tt} &   with &  0.5
 \end{array}\\
\]
$\\$

 ~~~~~{\bf Admissible derivation}:
\[
\begin{array}{llllll}
 \tuple{\underline{p(X)}; ~id}
       & \slda^{\cR_1} \\ 
\tuple{\&_{\tt P}(0.9, \&_{\tt G}
  (\underline{q(X_1)},@_{\tt aver}(r(X1),s(X1)))); \{X/X_1\} }
  & \sldb^{\cR_2} \\ 
  \tuple{\&_{\tt P}(0.9, \&_{\tt G}
  (0.8,@_{\tt aver}(\underline{r(a)},s(a)))); \{X/a,X_1/a\} }
  & \sldb^{\cR_3} \\ 
  \tuple{\&_{\tt P}(0.9, \&_{\tt G}
  (0.8,@_{\tt aver}(0.7,\underline{s(a)}))); \{X/a,X_1/a,X_2/a\} }
  & \sldb^{\cR_4} \\ 
 \tuple{\&_{\tt P}(0.9, \&_{\tt G}
  (0.8,@_{\tt aver}(0.7,0.5))); \{X/a,X_1/a,X_2/a,X_3/a\} }
\end{array}
\]
$\\$

~~~~~{\bf Interpretive derivation}:
\[
\begin{array}{lll}
 \tuple{\&_{\tt P}(0.9, \&_{\tt G}
 (0.8,\underline{@_{\tt aver}(0.7,0.5)})); \{X/a\}}
 & \is \\
\tuple{\&_{\tt P}(0.9, \underline{\&_{\tt G}
 (0.8,0.6)}); \{X/a\}} &  \is \\
\tuple{\underline{\&_{\tt P}(0.9,0.6)}; \{X/a\}}&  \is \\
 \tuple{0.54 ; \{X/a\} }.
\end{array}
\]
$\\$
\end{minipage}}
\caption{\ma\ program $\cP$ with admissible/interpretive
derivations for goal $p(X)$. }\label{fig-pro}
\end{figure*}
}

\begin{example} \label{exa-lfp} Consider again the multi-adjoint
  lattice $L$ and the \sma\ program $\cP$ of Example~\ref{ex1}.  Here,
  we have the following final admissible derivation for $p(X)$ in
  $\cP$ (the selected atom is underlined):
  \[
  \begin{array}{lll}
    \tuple{\underline{p(X)}; ~id}
    & \sld &  \tuple{\&^{s_1}(0.9, \&^{s_2}
      (\underline{q(X_1)},@_{\tt aver}(r(X1),s(X1)))); \{X/X_1\} } \\
    & \sld &
    \tuple{\&^{s_1}(0.9, \&^{s_2}
      (v^s,@_{\tt aver}(\underline{r(a)},s(a)))); \{X/a,X_1/a\} } \\
    & \sld &
    \tuple{\&^{s_1}(0.9, \&^{s_2}
      (v^s,@_{\tt aver}(0.7,\underline{s(a)}))); \{X/a,X_1/a,X_2/a\} } \\
    & \sld &
    \tuple{\&^{s_1}(0.9, \&^{s_2}
      (v^s,@_{\tt aver}(0.7,0.5))); \{X/a,X_1/a,X_2/a,X_3/a\} }
  \end{array}
  \]
  Therefore, the associated \saca\ is $\tuple{\&^{s_1}(0.9, \&^{s_2}
    (v^s,@_{\tt aver}(0.7,0.5))); \{X/a\} }$.
\end{example}
Given a goal $\cQ$ and a final admissible derivation $\tuple{\cQ;id}
\sld^\ast \tuple{\cQ';\sigma}$, we have that $\cQ'$ does not contain
atomic formulas. Now, $Q'$ can be \emph{solved} by using the following
interpretive stage:
%

\begin{definition}[interpretive step]\label{is}
  Let $L$ be a multi-adjoint lattice and $\cP$ be an \sma\ program
  over $L$. Given a \saca\ $\tuple{Q;\sigma}$, the \emph{interpretive}
  stage is formalized by means of the following transition relation
  $\is \subseteq (\cE^s \times \cE^s)$, which is defined as the least
  transition relation satisfying:
  \[
  \tuple{\cQ[\varsigma(r_1,\ldots,r_n)];\sigma} ~\is~
  \tuple{\cQ[\varsigma(r_1,\dots,r_n)/r_{n+1}]; \!\sigma} 
  \]
  where $\varsigma$ denotes a connective defined on $L$ and
  $\sem{\varsigma}\!(r_1,\ldots,r_n)=r_{n+1}$.
  
  An interpretive derivation of the form
  $\tuple{\cQ;\sigma}\is^\ast\tuple{\cQ';\theta}$ such that
  $\tuple{\cQ';\theta}$ cannot be further reduced, is called a
  \emph{final} interpretive derivation. In this case,
  $\tuple{\cQ';\theta}$ is called a \emph{symbolic fuzzy computed
    answer} (\sfca, for short).
  Also, if $\cQ'$ is a value of $L$, we say that $\tuple{\cQ';\theta}$
  is a fuzzy computed answer (\fca, for short).
\end{definition}

\begin{example} \label{exa-ica} Given the \saca\ of Ex.~\ref{exa-lfp}:
  $\tuple{\&^{s_1}(0.9, \&^{s_2} (v^s,@_{\tt aver}(0.7,0.5))); \{X/a\}
  }$, we have the following final interpretive derivation (the
  connective reduced is underlined):
  \[
  \tuple{\&^{s_1}(0.9, \&^{s_2} (v^s,\underline{@_{\tt
        aver}(0.7,0.5)})); \{X/a\} } \is \tuple{\&^{s_1}(0.9, \&^{s_2}
    (v^s,0.6)); \{X/a\} }
  \]
  with $\sem{@_\mathtt{aver}}(0.7,0.5)=0.6$.
  Therefore, $\tuple{\&^{s_1}(0.9, \&^{s_2} (v^s,0.6)); \{X/a\} }$ is
  a \sfca\ of $p(X)$ in $\cP$.
\end{example}
Given a multi-adjoint lattice $L$ and a symbolic language $\cL_L^s$,
in the following we consider \emph{symbolic substitutions} that are
mappings from symbolic values and connectives to expressions over
$\Sigma_L^T\cup\Sigma_L^C$. Symbolic substitutions are denoted by
$\Theta,\Gamma,\ldots$ Furthermore, for all symbolic substitution
$\Theta$, we require the following condition: $\revto^s/\revto_i
\in\Theta$ iff $\&^s/\&_i\in\Theta$, where $\tuple{\&^s,\revto^s}$ is
a symbolic adjoint pair and $\tuple{\&_i,\revto_i}$ is an adjoint pair
in $L$. Intuitively, this is required for the substitution to have the
same effect both on the program and on an $L^s$-expression.

Given an \sma\ program $\cP$ over $L$, we let $\sym(\cP)$ denote the
symbolic values and connectives in $\cP$. Given a symbolic
substitution $\Theta$ for $\sym(\cP)$, we denote by $\cP\Theta$ the
program that results from $\cP$ by replacing every symbolic symbol
$e^s$ by $e^s\Theta$. Trivially, $\cP\Theta$ is now a \ma\ program.

The following theorem is our key result in order to use \sma\ programs
for tuning the components of a \ma\ program:

\begin{theorem} \label{th1}
  Let $L$ be a multi-adjoint lattice and $\cP$ be an \sma\ program
  over $L$. Let $\cQ$ be a goal. Then, for any symbolic substitution
  $\Theta$ for $\sym(\cP)$, we have that $\tuple{v;\theta}$ is a \fca\
  for $Q$ in $\cP\Theta$ iff there exists a \sfca\ $\tuple{Q';\theta'}$
  for $\cQ$ in $\cP$ and $\tuple{\cQ'\Theta;\theta'} \is^\ast
  \tuple{v;\theta'}$, where $\theta'$ is a renaming of $\theta$.
\end{theorem}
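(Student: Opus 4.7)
The plan is to establish the theorem by two correspondence lemmas, one for each of the two stages of the operational semantics, and then glue them together. The first lemma says that admissible steps commute with symbolic substitution: a single step $\tuple{\cQ[A];\sigma}\sld\tuple{(\cQ[A/v\&_i\cB])\theta;\sigma\theta}$ in $\cP$ using rule $(H~\revto_i~\cB~with~v)\ri\cP$ corresponds to a step in $\cP\Theta$ using the rule $(H~\revto_i~\cB~with~v)\Theta$, and vice versa. The mgu $\theta$ is the same in both cases, because atoms are formulas of $\cL_L$ (heads do not contain symbolic symbols and $\Theta$ does not act on term variables), so $A\Theta=A$ and $H\Theta=H$. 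Because $\theta$ and $\Theta$ act on disjoint domains, they commute on contexts, and one verifies $(\cQ[A/v\&_i\cB])\theta\,\Theta = (\cQ\Theta[A/v\Theta\&_{i'}\cB\Theta])\theta$. The failure rule is handled symmetrically, since matching depends only on heads of program rules. Iterating, we get: $\tuple{\cQ;id}\sld^\ast\tuple{\cQ_S;\theta'}$ in $\cP$ iff $\tuple{\cQ;id}\sld^\ast\tuple{\cQ_S\Theta;\theta'}$ in $\cP\Theta$.

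The second lemma is the analogue for interpretive steps. One direction is immediate: if $\tuple{\cQ;\sigma}\is\tuple{\cQ_1;\sigma}$ in $\cP$ (reducing a subterm $\varsigma(r_1,\dots,r_n)$ with $\varsigma$ a concrete connective of $L$ and $r_1,\dots,r_n$ values of $L$), then the same reduction applies to $\cQ\Theta$ in $\cP\Theta$, yielding $\cQ_1\Theta$, since $\Theta$ fixes concrete connectives and values. The reverse direction does not hold at the level of single steps, because after applying $\Theta$ a previously symbolic connective may have become reducible; this is precisely why the statement of the theorem allows an additional interpretive reduction $\tuple{\cQ'\Theta;\theta'}\is^\ast\tuple{v;\theta'}$.

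Given these two lemmas, the ``if'' direction is straightforward. From the sfca $\tuple{\cQ';\theta'}$ for $\cQ$ in $\cP$ we obtain intermediate $\cQ_S$ with $\tuple{\cQ;id}\sld^\ast\tuple{\cQ_S;\theta'}\is^\ast\tuple{\cQ';\theta'}$ in $\cP$; lifting by $\Theta$ through both lemmas gives $\tuple{\cQ;id}\sld^\ast\tuple{\cQ_S\Theta;\theta'}\is^\ast\tuple{\cQ'\Theta;\theta'}$ in $\cP\Theta$; concatenating with the given $\tuple{\cQ'\Theta;\theta'}\is^\ast\tuple{v;\theta'}$ yields a fca $\tuple{v;\theta}$ for $\cQ$ in $\cP\Theta$, where $\theta=\theta'[\var(\cQ)]$. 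For the ``only if'' direction, we start from a fca $\tuple{v;\theta}$ in $\cP\Theta$, which decomposes as $\tuple{\cQ;id}\sld^\ast\tuple{\cQ_L;\theta'}\is^\ast\tuple{v;\theta'}$. Using the first lemma backwards we recover $\tuple{\cQ;id}\sld^\ast\tuple{\cQ_S;\theta'}$ in $\cP$ with $\cQ_S\Theta=\cQ_L$. Then we perform a maximal interpretive derivation in $\cP$ from $\cQ_S$, arriving at some irreducible $\cQ'$; the pair $\tuple{\cQ';\theta'}$ is the desired sfca.

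The delicate point, and the main obstacle I anticipate, is showing $\tuple{\cQ'\Theta;\theta'}\is^\ast\tuple{v;\theta'}$ in this last step. From $\cQ_S\is^\ast\cQ'$ in $\cP$ and the forward half of the second lemma we get $\cQ_L=\cQ_S\Theta\is^\ast\cQ'\Theta$ in $\cP\Theta$, so we have two reductions $\cQ_L\is^\ast v$ and $\cQ_L\is^\ast\cQ'\Theta$ in $\cP\Theta$. To conclude we need a confluence/commutation property of $\is$: since each interpretive step replaces a subterm $\varsigma(r_1,\dots,r_n)$ all of whose arguments are already values by the unique value $\sem{\varsigma}(r_1,\dots,r_n)$, reductions at disjoint positions commute, and a standard diamond-style argument shows that $\is$ is confluent with unique normal forms (the $L^s$-expressions that contain no subterm of the reducible shape, in particular pure values). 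Because $v$ is a normal form, any $\is^\ast$-reduct of $\cQ_L$ must further $\is^\ast$-reduce to $v$; applying this to $\cQ'\Theta$ gives the required $\tuple{\cQ'\Theta;\theta'}\is^\ast\tuple{v;\theta'}$, completing the proof.
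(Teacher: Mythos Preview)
Your proof is correct and follows essentially the same architecture as the paper's: first establish that admissible derivations in $\cP$ and $\cP\Theta$ correspond step-by-step (your first lemma is the paper's stage~1), then that interpretive steps in $\cP$ lift to $\cP\Theta$ (your forward half of the second lemma is the paper's stage~2), and finally that the remaining interpretive steps after instantiation reach the \fca\ (the paper's stage~3).

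The one place where you go beyond the paper is the ``only if'' direction. The paper's sketch treats the two derivations $\cD_{\cP}$ and $\cD_{\cP\Theta}$ as running in parallel and implicitly assumes that the interpretive steps in $\cP\Theta$ can be reordered so that those already available in $\cP$ come first. You make this rigorous by observing that $\is$ is confluent: every redex $\varsigma(r_1,\dots,r_n)$ has all arguments in $L$, so distinct redexes sit at disjoint positions and a diamond argument gives unique normal forms. This is a genuine gap in the paper's sketch that you have filled; without it, one cannot conclude from $\cQ_L\is^\ast v$ and $\cQ_L\is^\ast\cQ'\Theta$ that $\cQ'\Theta\is^\ast v$. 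Your argument that heads contain no symbolic symbols (so mgu's are unaffected by $\Theta$) and that $\theta$ and $\Theta$ commute because they act on disjoint domains is also more explicit than the paper's treatment. Overall your proof is a more careful version of the same plan.
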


\begin{proof} (Sketch)
  For simplicity, we consider that the same fresh variables are used
  for renamed apart rules in both derivations.

  Consider the following derivations for goal $\cQ$ w.r.t. programs
  $\cP$ and $\cP\Theta$, respectively:
  \[
\begin{array}{l}
  \cD_{\cP~~}: \tuple{\cQ;id} \sld^\ast \tuple{\cQ'';\theta}
  ~~\is^\ast \tuple{\cQ';\theta}
  \\
  \cD_{\cP\Theta}: \tuple{\cQ;id} \sld^\ast
  \tuple{\cQ''\Theta;\theta} \is^\ast \tuple{\cQ'\Theta;\theta}
  \end{array}
  \] 
  Our proof proceeds now in three stages:
  \begin{enumerate}
  \item Firstly, observe that the sequences of symbolic admissible
    steps in $\cD_{\cP}$ and $\cD_{\cP\Theta}$ exploit the whole set
    of atoms in both cases, such that a program rule $R$ is used in
    $\cD_{\cP}$ iff the corresponding rule $R\Theta$ is applied in
    $\cD_{\cP\Theta}$ and hence, the \saca's of the derivations are
    $\tuple{\cQ'';\theta}$ and $\tuple{\cQ''\Theta;\theta}$,
    respectively.
  \item Next, we proceed by applying interpretive steps till reaching
    the \sfca\ $\tuple{\cQ';\theta}$ in the first derivation
    $\cD_{\cP}$ and it is easy to see that the same sequence of
    interpretive steps are applied in $\cD_{\cP\Theta}$ thus leading
    to state $\tuple{\cQ'\Theta;\theta}$, even when in this last case
    we do not necessarily obtain a \sfca.
  \item Finally, it suffices by instantiating the \sfca\
    $\tuple{\cQ';\theta}$ in the first derivation $\cD_{\cP}$ with the
    symbolic substitution $\Theta$, for completing both derivations
    with the same sequence of interpretive steps till reaching the
    desired \fca\ $\tuple{v;\theta}$.
  \end{enumerate}
  \qed
\end{proof}

\begin{example}
  Consider again the multi-adjoint lattice $L$ and the \sma\ program
  $\cP$ of Example~\ref{ex1}. Let $\Theta =
  \{\revto^{s_1}/\revto_\mathtt{P},\&^{s_1}/\&_\mathtt{P},
  \&^{s_2}/\&_\mathtt{G}, v^s/0.8\}$ be a symbolic substitution.
  Given the \sfca\ from Example~\ref{exa-ica}, we have
  \[
  \tuple{\&^{s_1}(0.9, \&^{s_2} (v^s,0.6))\Theta; \{X/a\} } =
  \tuple{\&_\mathtt{P}(0.9, {\&_\mathtt{G} (0.8,0.6)});
    \{X/a\} }
  \]
  Therefore, we have the following interpretive final derivation for
  the the instantiated \sfca:
  \[
  \begin{array}{lll}
    \tuple{\&_\mathtt{P}(0.9, \underline{\&_\mathtt{G}
        (0.8,0.6)}); \{X/a\} } \is \tuple{\underline{\&_\mathtt{P}(0.9,0.6)}; \{X/a\} }
    \is \tuple{0.54; \{X/a\} }
    \end{array}
  \]
  By Theorem~\ref{th1}, we have that $\tuple{0.54; \{X/a\} }$ is also
  a \fca\ for $p(X)$ in $\cP\Theta$.
\end{example}

\section{Tuning Multi-adjoint Logic Programs}\label{sec-tuning}

In this section, we introduce an automated technique for tuning
multi-adjoint logic programs using \sma\ programs. We illustrate the
technique with an example.

Consider a typical \pr\ clause ``$H :- B_1,\ldots,B_n$''. It can be
fuzzified in order to become a \ma\ rule ``$H ~\revto_{label}~
\cB \mbox{ with } v$'' by performing the following actions:
\begin{enumerate}
\item weighting it with a truth degree $v$,
\item connecting its head and body with a fuzzy implication symbol
  $\revto_{label}$ (belonging to a concrete adjoint pair
  $\tuple{\revto_{label}, \&_{label}}$) and,
\item linking the set of atoms $B_1,\ldots,B_n$ on its body $\cB$ by
  means of a set of fuzzy connectives (i.e., conjunctions $\&_i$,
  disjunctions $|_j$ or aggregators $@_k$).
\end{enumerate}
Introducing changes on each one of the three fuzzy components just
described above may affect---sometimes in an unexpected way---the set
of fuzzy computed answers for a given goal.  

Typically, a programmer has a model in mind where some parameters have
a clear value. For instance, the truth value of a rule might be
statistically determined and, thus, its value is easy to obtain. In
other cases, though, the most appropriate values and/or connectives 
depend on some subjective notions and, thus, programmers do not 
know how to obtain these values. In a typical scenario, we have
an extensive set of \emph{expected} computed answers (i.e., \emph{test cases}), so the
programmer can follow a ``try and test'' strategy. Unfortunately, this
is a tedious and time consuming operation. Actually, it might even be
impractical when the program should correctly model a large number of
test cases provided by the user.

Therefore, we propose an automated technique that proceeds as
follows. In the following, for simplicity, we only consider the first
answer to a goal. Note that this is not a significant restriction
since one can, e.g., encode multiple solutions in a list so that the
main goal is always deterministic and all non-deterministic calls are
hidden in the computation. Extending the following algorithm for
multiple solutions is not difficult, but makes the formalization more
cumbersome. Hence, we say that a \emph{test case} is a pair $(Q,f)$
where $Q$ is a goal and $f$ is a \fca.

\begin{definition}[algorithm for symbolic tuning of \ma\ programs] \label{alg}
\begin{description}
\item[\textbf{Input:}] an \sma\ program $\cP^s$ and a number of
  (expected) test cases $(Q_i,\tuple{v_i;\theta_i})$, where $Q_i$ is a
  goal and $\tuple{v_i;\theta_i}$ is its expected \fca\  for
  $i=1,\ldots,k$.
\item[\textbf{Output:}] a symbolic substitution $\Theta$.
\end{description}

\begin{enumerate}
\item For each test case $(Q_i,\tuple{v_i;\theta_i})$, compute the
  \sfca\ $\tuple{Q'_i,\theta_i}$ of $\tuple{Q_i,id}$ in $\cP^s$.
\item Then, consider a finite number of possible symbolic
  substitutions for $\sym(\cP^s)$, say $\Theta_1,\ldots,\Theta_n$,
  $n>0$.
\item For each $j\in\{1,\ldots,n\}$, compute
  $\tuple{Q'_i\Theta_j,\theta_i} \is^\ast \tuple{v_{i,j};\theta_i}$,
  for $i=1,\ldots,k$. Let $d_{i,j} = |v_{i,j} - v_i|$, where $|\_|$
  denotes the absolute value.
\item Finally, return the symbolic substitution $\Theta_j$ that
  minimizes $\sum_{i=1}^k d_{i,j}$.
\end{enumerate}
\end{definition}
Observe that the precision of the algorithm can be parameterized
depending on the set of symbolic substitutions considered in step
(2). For instance, one can consider only truth values
$\{0.3,0.5,0.8\}$ or a larger set
$\{0.1,0.2,\ldots,1.0\}$; one can consider only
three possible connectives, or a set including ten of them. Obviously, the
larger the domain of values and connectives is, the more precise the
results are (but the algorithm is more expensive, of course).

\comment{
\newpage
\red{

\begin{definition}[algorithm for symbolic tuning of \ma\ programs] \label{alg}
\begin{description}
\item[\textbf{Input:}] an \sma\ program $\cP^s$ and a number of
  (expected) test cases $(Q_i,L_i
	)$,for
  $i=1,\ldots,k$, where $Q_i$ is a
  goal and $L_i= [\tuple{v^1_i,\theta^1_i},\ldots,\tuple{v^{l_i}_i,\theta^{l_i}_i}
	]$ is its list of expected \fca's  
\item[\textbf{Output:}] a symbolic substitution $\Theta$.
\end{description}

\begin{enumerate}
\item For each test case $(Q_i,L_i)$
   \begin{enumerate}
   \item[]For each expected \fca\ $\tuple{v^l_i,\theta^l_i} \in L_i$, for $l=1,\ldots,l^i$, 
	 \item[] ~~~~compute the \sfca\ $\tuple{Q^{l}_i,\theta^{l}_i}$ of $\tuple{Q_i,id}$ in $\cP^s$,  
\end{enumerate}
\item Then, consider a finite number of possible symbolic
  substitutions for $\sym(\cP^s)$, say $\Theta_1,\ldots,\Theta_n$,
  $n>0$.
\item For each $j\in\{1,\ldots,n\}$, 
\begin{enumerate}
 \item[] For each $i=1,\ldots,k$ and for each $l=1,\ldots,l^i$,
\item[] ~~compute   $\tuple{Q^l_i\Theta_j,\theta^l_i} \is^\ast \tuple{v^l_{i,j};\theta^l_i}$,
\item[]	~~ let $d^l_{i,j} = |v^l_{i,j} - v^l_i|$, where $|\_|$
  denotes the absolute value.
	\end{enumerate}
\item Finally, return the symbolic substitution $\Theta_j$ that
  minimizes $\sum_{i=1}^k \sum_{l=1}^{l_i} d^{l}_{i,j}$.
\end{enumerate}
\end{definition}
}
}
This algorithm represents a much more efficient method for tuning the
fuzzy parameters of a \ma\ program than repeatedly executing the
program from scratch.


Let us explain the technique by means of a small, but realistic
example.

Here, we assume that a travel agency offers booking services on a
large number of hotels. The travel agency has a web site where the
user can rate every hotel with a value between 1\% and 100\%. The
purpose in this case is to specify a fuzzy model that correctly
represents the rating of each hotel.

In order to simplify the presentation, we consider that there are only
three hotels, named {\it sun}, {\it sweet} and {\it lux}. In the web
site, these hotels have been rated $0.60$, $0.77$ and $0.85$
(expressed as real numbers between $0$ and $1$), respectively.  Our
simple model just depends on three factors: the hotel facilities, the
convenience of its location, and the rates, denoted by predicates
$\mathit{facilities}$, $\mathit{location}$ and $\mathit{rates}$,
respectively. An \sma\ program modelling this scenario is the
following:

%
\[  \begin{array}{l@{~~}l@{~~}l@{~~}ll@{~~}l}
     \mathit{popularity}(X) & \revto^s & |^s(\mathit{facilities}(X),@_{\tt aver}(\mathit{location}(X),\mathit{rates}(X)))
    & \mathit{with} & 0.9 \\
		\\
     \mathit{facilities}(sun) & & &	\mathit{with} &   v^s\\
     \mathit{location}(sun) &  & &		\mathit{with}  & 0.4\\
     \mathit{rates}(sun) &  &  &		\mathit{with} &  0.7\\
		\\
     \mathit{facilities}(sweet) & & &	\mathit{with} &   0.5\\
     \mathit{location}(sweet) &  & &		\mathit{with}  & 0.3\\
     \mathit{rates}(sweet) &  &  &		\mathit{with} &  0.1\\
		\\
     \mathit{facilities}(lux) & & &	\mathit{with} &   0.9\\
     \mathit{location}(lux) &  & &		\mathit{with}  & 0.8\\
     \mathit{rates}(lux) &  &  &		\mathit{with} &  0.2\\
  \end{array}
\]
Here, we assume that all weights can be easily obtained except for the
weight of the fact $\mathit{facilities}(sun)$, which is unknown, so we
introduce a symbolic weight $v^s$. Also, the programmer has some
doubts on the connectives used in the first rule, so she introduced a
number of symbolic connectives: the implication and disjunction
symbols, i.e. $\revto^s$ and $|^s$.

\begin{figure}[t]
$\begin{array}{lllllllllllllrl}
 ~~~~&  \adjoint_{\tt P} (x,y) &=& x *  y & ~~~~~~  
  |_{\tt P}(x,y) &=& x+y-x*y & ~~~~~
    \emph{Product logic}
\\ 
&  \adjoint_{\tt G} (x,y) &=& \min(x,y) & ~~~~~~  
  |_{\tt G}(x,y) &=& \max(x,y) & ~~~~~
      \emph{G\"odel logic}
\\ 
 & \adjoint_{\tt L} (x,y) &=&\max(x+y-1,0) & ~~~~~~  
  |_{\tt L}(x,y) &=& \min(x+y,1) & ~~~~~
    \emph{{{\L{}}ukasiewicz logic} }\\ 
  \end{array}\\$	
 \caption{Conjunctions and disjunctions of three different fuzzy logics over
$\tuple{[0,1],\leq}$.} \label{fig-conn}

\end{figure}

We consider, for each symbolic connective, the three
possibilities shown in Figure~\ref{fig-conn} over the lattice
$\tuple{[0,1],\leq}$, which are based in the so-called {\it Product},
{\em G\"{o}del} and {\em {{\L{}}}ukasiewicz} fuzzy logics.  Adjectives
like {\em pessimist}, {\em realistic} and {\em optimist} are sometimes
applied to the {\em {{\L{}}}ukasiewicz}, {\em Product} and {\em
  G\"{o}del} fuzzy logics, respectively, since conjunctive operators
satisfy that, for any pair of real numbers $x$ and $y$ in $[0,1]$, we
have:

$$
0
 \leq
  \&_{\tt L}(x,y)
 \leq
  \&_{\tt P} (x,y)
  \leq
  \&_{\tt G} (x,y)
  \leq
 1
$$

\noindent In contrast, the contrary holds for the disjunction operations, that is:

$$
0
 \leq
  |_{\tt G}(x,y)
 \leq
  |_{\tt P} (x,y)
  \leq
  |_{\tt L} (x,y)
  \leq
 1
$$

\noindent Note that it is more difficult to satisfy a condition based
on a pessimist conjunction/disjunction (i.e, inspired by the {\em
  {{\L{}}}ukasiewicz} and {\em G\"{o}del} fuzzy logics, respectively)
than with {\em Product} logic based operators, while the optimistic
versions of such connectives
are less restrictive, obtaining greater truth degrees on \fca's. This is
a consequence of the following chain of inequalities:
$$ 0
 \leq
  \&_{\tt L}(x,y)
 \leq
  \&_{\tt P} (x,y)
  \leq
  \&_{\tt G} (x,y)
  \leq
  |_{\tt G}(x,y)
 \leq
  |_{\tt P}(x,y)
  \leq
  |_{\tt L}(x,y)
  \leq
 1
$$

\noindent Therefore, it is desirable to tune the symbolic constants $\revto^s$ and $|^s$ in the first rule of our symbolic \sma\ program  by selecting operators in the previous sequence 
until finding solutions satisfying in a stronger (or weaker) way the user's requirements. 

Focusing on our particular \sma\ program, we consider the following 
three test cases:   
\[
\begin{array}{l}
  (popularity(sun),\tuple{ 0.60; id }),\\
  (popularity(sweet), \tuple{0.77; id} ),\\
  (popularity(lux),\tuple{0.85;id})
\end{array}
\]
for which the respective three \sfca's achieved after applying the first step of our tuning algorithm are: 
\[\begin{array}{lll}
\tuple{ \&^s(0.9,|^s(v^s,0.55)); id 
}\\
\tuple{ \&^s(0.9,|^s(0.5,0.65)); id 
}\\
\tuple{ \&^s(0.9,|^s(0.9,0.5)); id 
}\\
\end{array}
\]
\comment{
R4 < &luka(0.9,|prod(0.3,@aver(0.4,0.7))), {X/sun,X1/sun} >

R7 < &luka(0.9,|prod(0.5,@aver(0.3,1))), {X/sweet,X1/sweet} >

R10 < &luka(0.9,|prod(0.9,@aver(0.8,0.2))), {X/lux,X1/lux} >
}
In the second step of the algorithm, 
we must provide symbolic substitutions for being applied to this set
of \sfca's in order to transform them into \fca's which are as close as possible to those in the test cases.  
Table~\ref{tabla}  
shows the results of the tuning process, where each column has the following meaning:
\begin{itemize}
\item The first pair of columns serve for choosing the implication\footnote{It is important to note that, at execution time, each implication symbol belonging to a concrete adjoint pair is replaced by its adjoint conjunction (see again our repertoire of adjoint pairs in Figure \ref{fig-adj} in the preliminaries section).}  and disjunction connectives of the first program rule (i.e., $\revto^s$ and $|^s$) from each one of the three fuzzy logics considered so far.
\item In the third column, we consider three possible truth degrees
  ($0.3$, $0.5$ and $0.7$) as the potential assignment to the symbolic
  weight $v^s$. In this example, this set suffices to obtain an
  accurate solution.  
\item Each row represents a different symbolic substitution, which are
  shown in column four. 
\item Next, headed by the name of each hotel in the test cases, we have pairs of columns which represent, respectively, the potential truth degree associated to the \fca\ obtained with the corresponding symbolic substitution, 
and  the deviation of such  value w.r.t. the expected truth degree, thus summarizing the computations performed on the third step of our algorithm.     
\item The sum of the three deviations is expressed in the last column of the table, which conforms the value to be minimized as indicated in the final, fourth step of the algorithm.
\end{itemize}
According to these criteria, we observe that the cell with lower value (in particular, $0.05$) in the last column of Table $1$  
refers to the symbolic substitution 
$\Theta_{4} = \{\revto^{s}/\revto_\mathtt{L},|^{s}/|_\mathtt{P}, v^s/0.3\}$\footnote{Anyway, 
another interesting alternative, with a slightly greater deviation---0.6---, could be 
$\Theta_{13}=\{\revto^{s}/\revto_\mathtt{P},|^{s}/|_\mathtt{P},v^s/0.3\}$.}, 
which solve our tuning problem by suggesting that the first pair of
rules in our final, tuned \ma\ program should be the following ones:
\[  \begin{array}{l@{~~}l@{~~}l@{~~}ll@{~~}l}
     rating(X) & \revto_\mathtt{L} & |_\mathtt{P}(services(X),@_{\tt aver}(location(X),budget(X)))
    & \mathit{with} & 0.9 \\
     services(sun) & & &	\mathit{with} &   0.3\\
		\end{array}
\]
%


\begin{table}[!t]\label{tab}
\caption{Table summarizing the results achieved when tuning
  connectives and weights.} \label{tabla}
\centering\scriptsize
~~\\
\begin{tabular}{|c||c|c|c|c|c|c|c|c|c|c|}
\hline
$\revto^s$ & $
|^s$&~$v^s~$ &$\Theta$&\multicolumn{2}{c|}{sun~}&\multicolumn{2}{c|}{sweet~}&\multicolumn{2}{c|}{lux~}&z\\[1.2ex]
\hline
\hline
\multirow{9}{0.6in}{$\revto_{L}$}&\multirow{3}{0.5in}{$|_{G}$} &~0.3~&~$\Theta_1$~&~0.45~&0.15&~0.55&0.22&~0.80&0.05&~0.42~ \\[1ex]\cline{3-11}
                                 &                             &~0.5~&~$\Theta_2$~&~0.45~&0.15&~0.55&0.22&~0.80&0.05&~0.42~ \\[1ex]\cline{3-11}
																 &                             &~0.7~&~$\Theta_3$~&~0.60~&0.00&~0.55&0.22&~0.80&0.05&~0.27~ \\[1ex]\cline{2-11}
																 &\multirow{3}{0.5in}{$|_{P}$} &~\textbf{0.3}~&~$\mathbf{\Theta_4}$~&~\textbf{0.59}~&\textbf{0.01}&~\textbf{0.73}&\textbf{0.04}&~\textbf{0.85}&\textbf{0.00}&~\textbf{0.05}~ \\[1ex]\cline{3-11}
                                 &                             &~0.5~&~$\Theta_5$~&~0.68~&0.08&~0.73&0.04&~0.85&0.00&~0.12~ \\[1ex]\cline{3-11}
																 &                             &~0.7~&~$\Theta_6$~&~0.77~&0.17&~0.73&0.04&~0.85&0.00&~0.21~ \\[1ex]\cline{2-11}
							                   &\multirow{3}{0.5in}{$|_{L}$} &~0.3~&~$\Theta_7$~&~0.75~&0.15&~0.90&0.13&~0.90&0.05&~0.33~ \\[1ex]\cline{3-11}
                                 &                             &~0.5~&~$\Theta_8$~&~0.90~&0.30&~0.90&0.13&~0.90&0.05&~0.48~ \\[1ex]\cline{3-11}
																 &                             &~0.7~&~$\Theta_9$~&~0.90~&0.30&~0.90&0.13&~0.90&0.05&~0.48~ \\[1ex]\cline{2-11}
\hline
\multirow{9}{0.5in}{$\revto_{P}$}&\multirow{3}{0.5in}{$|_{G}$}&~0.3~&~$\Theta_{10}$~&~0.50~&0.10&~0.59&0.18&~0.81&0.04&~0.32~ \\[1ex]\cline{3-11}
                                 &                            &~0.5~&~$\Theta_{11}$~&~0.50~&0.10&~0.59&0.18&~0.81&0.04&~0.32~ \\[1ex]\cline{3-11}
																 &                            &~0.7~&~$\Theta_{12}$~&~0.63~&0.03&~0.59&0.18&~0.81&0.04&~0.25~ \\[1ex]\cline{2-11}
																 &\multirow{3}{0.5in}{$|_{P}$}&~0.3~&~$\Theta_{13}$~&~0.62~&0.02&~0.74&0.03&~0.86&0.01&~0.06~ \\[1ex]\cline{3-11}
                                 &                            &~0.5~&~$\Theta_{14}$~&~0.70~&0.10&~0.74&0.03&~0.86&0.01&~0.14~ \\[1ex]\cline{3-11}
																 &                            &~0.7~&~$\Theta_{15}$~&~0.78~&0.18&~0.74&0.03&~0.86&0.01&~0.22~ \\[1ex]\cline{2-11}
							                   &\multirow{3}{0.5in}{$|_{L}$}&~0.3~&~$\Theta_{16}$~&~0.77~&0.17&~0.90&0.13&~0.90&0.05&~0.35~ \\[1ex]\cline{3-11}
                                 &                            &~0.5~&~$\Theta_{17}$~&~0.90~&0.30&~0.90&0.13&~0.90&0.05&~0.48~ \\[1ex]\cline{3-11}
																 &                            &~0.7~&~$\Theta_{18}$~&~0.90~&0.30&~0.90&0.13&~0.90&0.05&~0.48~ \\[1ex]\cline{2-11}

\hline
\multirow{9}{0.5in}{$\revto_{G}$}&\multirow{3}{0.5in}{$|_{G}$}&~0.3~&~$\Theta_{19}$~&~0.55~&0.05&~0.65&0.12&~0.90&0.05&~0.22~ \\[1ex]\cline{3-11}
                                 &                            &~0.5~&~$\Theta_{20}$~&~0.55~&0.05&~0.65&0.12&~0.90&0.05&~0.22~ \\[1ex]\cline{3-11}
																 &                            &~0.7~&~$\Theta_{21}$~&~0.70~&0.10&~0.65&0.12&~0.90&0.05&~0.27~ \\[1ex]\cline{2-11}
																 &\multirow{3}{0.5in}{$|_{P}$}&~0.3~&~$\Theta_{22}$~&~0.69~&0.09&~0.83&0.06&~0.90&0.05&~0.20~ \\[1ex]\cline{3-11}
                                 &                            &~0.5~&~$\Theta_{23}$~&~0.78~&0.18&~0.83&0.06&~0.90&0.05&~0.29~ \\[1ex]\cline{3-11}
																 &                            &~0.7~&~$\Theta_{24}$~&~0.87~&0.27&~0.83&0.06&~0.90&0.05&~0.38~ \\[1ex]\cline{2-11}
							                   &\multirow{3}{0.5in}{$|_{L}$}&~0.3~&~$\Theta_{25}$~&~0.86~&0.26&~0.90&0.13&~0.90&0.05&~0.44~ \\[1ex]\cline{3-11}
                                 &                            &~0.5~&~$\Theta_{26}$~&~0.90~&0.30&~0.90&0.13&~0.90&0.05&~0.48~ \\[1ex]\cline{3-11}
																 &                            &~0.7~&~$\Theta_{27}$~&~0.90~&0.30&~0.90&0.13&~0.90&0.05&~0.48~ \\[1ex]\cline{2-11}
\hline
\end{tabular}
\end{table}

\section{Discussion}\label{sec-conc}

In this paper, we have been concerned with fuzzy programs belonging to the so-called {\em multi-adjoint logic programming} approach. Our improvements are twofold:
\begin{itemize}
\item On one side, we have extended their syntax for allowing the presence of symbolic weights and connectives on program rules, which very often prevents the full evaluations of goals. As a consequence, we have also relaxed the operational principle for producing what we call {\em symbolic fuzzy computed answers}, where all atoms have been exploited and the maximum  number of expressions involving connectives of the underlaying lattice of truth degrees have been solved too.   

\item On the other hand, we have introduced a tuning process for \ma\
  programs that takes as inputs a set of expected test cases and an
  \sma\ program where some connectives and/or truth degrees are
  unknown.
\end{itemize}
As future work, we consider the implementation of these techniques in
the \fl\ platform, which is available from
\texttt{http://dectau.uclm.es/floper/}.  Currently, the system can be
used to compile \ma\ programs to standard \pr\ code, draw
\emph{derivation trees}, generate declarative traces, and execute \ma\
programs, and it is ready for being extended with powerful
transformation and optimization techniques
\cite{JMP05FSS,JMP06JUCS,JMP09FSS}. Our last update described in
\cite{JMPV15EPTCS,JMPV16RULEML}, allows the system to cope with similarity
relations cohabiting with lattices of truth degrees,   
since this feature is an interesting topic for being
embedded into the new tuning
technique in the near future.


Another interesting direction for further research, consists in
combining our approach with recent fuzzy variants of SAT/SMT
techniques. Research on SAT (Boolean Satisfiability) and SMT
(Satisfiability Modulo Theories)~\cite{barrett2009satisfiability} has
proven very useful in the development of highly efficient solvers for
classical logic. Some recent approaches aim at covering fuzzy logics,
e.g., \cite{Ansotegui2012ISMVL,Godo12}, which deal with propositional
fuzzy formulae containing several propositional symbols linked with
connectives defined in a lattice of truth degrees, as the ones used on
\ma\ programs.\footnote{
  Instead of focusing on satisfiability, (i.e., proving the existence
  of at least one model) as usually done in a SAT/SMT setting, in
  \cite{BMVV13ECEASST,ABLMVV15SUM} we have faced the problem of
  finding the whole set of models
  for a given fuzzy formula by re-using a previous method based on
  fuzzy logic programming where the formula is conceived as a goal
  whose derivation tree, provided by the \fl\ tool, contains on its
  leaves all the models of the original formula, together with other
  interpretations.} We think that our tuning algorithm could
significantly improve its efficiency if the set of \sfca's
instantiated with symbolic substitutions can be expressed as fuzzy
formulae for the fuzzy SAT/SMT solvers mentioned above.


\end{document}